\def\maxwidth{\ifdim\Gin@nat@width>\linewidth\linewidth\else\Gin@nat@width\fi}
\def\maxheight{\ifdim\Gin@nat@height>\textheight\textheight\else\Gin@nat@height\fi}
\def\fps@figure{htbp}
\title{Causal Models, Prediction, and Extrapolation in Cell Line Perturbation Experiments}
\author{James P. Long\thanks{Department of Biostatistics, University of Texas MD Anderson Cancer Center}\, \footnote{Corresponding Author: jplong@mdanderson.org} \, \, \, \, \,  Yumeng Yang\thanks{Biomedical Informatics, The University of Texas Health Science Center at Houston}  \, \, \, \, \,  Kim-Anh Do \footnotemark[1]}
\date{\today}
\newcommand{\argmin}[1]{{\underset{#1}{\textnormal{argmin}} \, }}
\newcommand{\bs}[1]{\mathbf{#1}}
\newtheorem{Thm}{\underline{\bf Theorem}}
\newtheorem{Lem}{\underline{\bf Lemma}}
\begin{document}
\maketitle

{
\setcounter{tocdepth}{2}
\tableofcontents
}
\hypertarget{abstract}{%
\section*{Abstract}\label{abstract}}

In cell line perturbation experiments, a collection of cells is perturbed with external agents (e.g. drugs) and responses such as protein expression measured. Due to cost constraints, only a small fraction of all possible perturbations can be tested in vitro. This has led to the development of computational (in silico) models which can predict cellular responses to perturbations. Perturbations with clinically interesting predicted responses can be prioritized for in vitro testing. In this work, we compare causal and non-causal regression models for perturbation response prediction in a Melanoma cancer cell line. The current best performing method on this data set is Cellbox which models how proteins causally effect each other using a system of ordinary differential equations (ODEs). We derive a closed form solution to the Cellbox system of ODEs in the linear case. These analytic results facilitate comparison of Cellbox to non--causal regression approaches. We show that causal models such as Cellbox, while requiring more assumptions, enable extrapolation in ways that non-causal regression models cannot. For example, causal models can predict responses for never before tested drugs. We illustrate these strengths and weaknesses in simulations. In an application to the Melanoma cell line data, we find that regression models outperform the Cellbox causal model. 

\hypertarget{introduction}{%
\section{Introduction}\label{introduction}}

In cell line perturbation experiments, a collection of cells is perturbed with gene knockdowns, overexpression, or pharmaceutical drugs and responses such as cell survival and gene and protein expression are measured. The results of these experiments play an important role in our understanding of cellular biology and in development of treatments for complex diseases such as cancer \citep{zhao2020large,subramanian2017next,korkut2015perturbation}.

There are a huge number of possible perturbations that can be applied to a cell line. For example, in human cell lines there are $\sim 20,000$ genes which could be perturbed (e.g. knocked out). Thus there are $\sim 200$ million perturbations of gene pairs (double knockouts). Further each perturbation may be applied across hundreds of cell lines (e.g. cells of different types of cancer). Thus in practice even very large scale experiments can only test a small set of all possible perturbations.

This limitation has led to the development of in silico perturbation response prediction models \citep{squires2020causal,lotfollahi2019scgen,yuan2021cellbox,korkut2015perturbation}. Models are typically trained on a set of perturbations which are experimentally tested in a laboratory and where cellular responses to the perturbation are known (up to technical replicate variability). These in silico models can then be used to make response predictions for untested perturbations. Predicted responses of biological interest, e.g. a perturbation which is predicted to suppress growth in a tumor cell line, can then be experimentally validated in vitro.

One class of perturbation prediction methods uses the training data to construct a causal model specifying how response variables such as gene or protein expression influence each other \citep{meinshausen2016methods,rothenhausler2019,sachs2005causal,squires2020causal}. For example, a model may assume that gene expressions follow a causal Directed Acyclic Graph (DAG). Perturbations are used to estimate the DAG (called a gene regulatory network (GRN) in this context). The GRN can be used for perturbation prediction because if the direct targets of a perturbation are known (e.g. a knockdown of gene X directly changes gene X), then the downstream effects of this perturbation on other genes can be inferred from the GRN.





Here we explore the relationship between regression and causal models in cell line perturbation response prediction. We consider a recently proposed causal method Cellbox which models protein responses to perturbations using a system of ordinary differential equations (ODEs) which specify how proteins causally effect each other across time \citep{yuan2021cellbox}. In recent work, Cellbox outperformed competitor methods, including co-expression models, Bayesian networks, and Neural Networks, in predicting the protein and phenotype responses of melanoma cell line SK-Mel-133 to drug perturbations.

We derive a closed form solution to the Cellbox system of ODEs in the linear case. These analytic results facilitate comparison of Cellbox to a regression approach. Cellbox requires knowledge of the direct targets of treatment and must estimate a protein regulatory network. On these two points, the regression model is simpler and makes fewer assumptions. However Cellbox can extrapolate to predict the effect of previously untested drugs whereas the regression model is not formally defined in this context. We illustrate the strengths and weaknesses of regression and causal models in simulations. These are some of the first results connecting causal modeling, extrapolation, and prediction.


Next we compare the performance of the Cellbox causal model to a regression model on the original motivating Melanoma cell line data. We compare Cellbox to regression with random fold cross validation and leave on drug out cross validation. The latter form of validation favors Cellbox because the regression model has no way of inferring the effect of the drug not used in the training data. Surprisingly, we find that regression has superior/equal performance in both settings. These results suggest that Cellbox may be better suited for application in other settings, such as with richer time domain data, more perturbations, or regulatory networks which more closely follow modeling assumptions. The regression model we propose should be used as a benchmark in future studies with this Melanoma cell line (or similar data sets) since it has the best performance relative to methods tested in the literature thus far.

This work is organized as follows. Section \ref{overview} introduces the Melanoma cell line perturbation data and summarizes how causal models can be used for prediction when extrapolating outside the domain of the training data. Section \ref{cellbox} derives analytic results for comparing the causal model Cellbox with regression approaches for predicting cellular responses to perturbations. Simulations are used to demonstrate the strengths and weakness of different approaches to prediction. Section \ref{mela} contains numerical results comparing Cellbox with regression on the Melanoma cell line perturbation data. Section \ref{discussion} contains conclusion and discussion. All code for reproducing the results in this work is publicly available.\footnote{\url{https://github.com/longjp/causal-pred-drug-code}}





\hypertarget{overview}{%
\section{Overview of Data and Causal Prediction}\label{overview}}

In this work we consider perturbation experiments on a RAFi-resistant melanoma cell line SkMel-133 originally collected in \cite{korkut2015perturbation}. The data structure is depicted in Figure \ref{fig:lodo-struct}. The cell line was treated with 89 drug perturbations (rows). Perturbations are defined by the doses of 12 drugs (blue columns). Drugs were applied as a single agent and in combinations of two drugs. Since at most two drugs were used in any experiment, each row of the blue columns contains either 1 (if perturbation is with single drug) or 2 (if two drug perturbation was performed) non-zero values. In each perturbation experiment, the expression of 82 proteins was measured 24 hours after perturbation using Reverse Phase Protein Arrays \citep{tibes2006reverse}. In addition 5 cell phenotypes were measured, quantifying cell-cycle progression and cell viability (orange columns). For this work, we use the data as supplied by \cite{yuan2021cellbox}.\footnote{Available at \url{https://github.com/sanderlab/CellBox}.}

\begin{figure}
 \centering
  \begin{subfigure}{0.49\textwidth}
   \centering
   \includegraphics[width=0.95\linewidth]{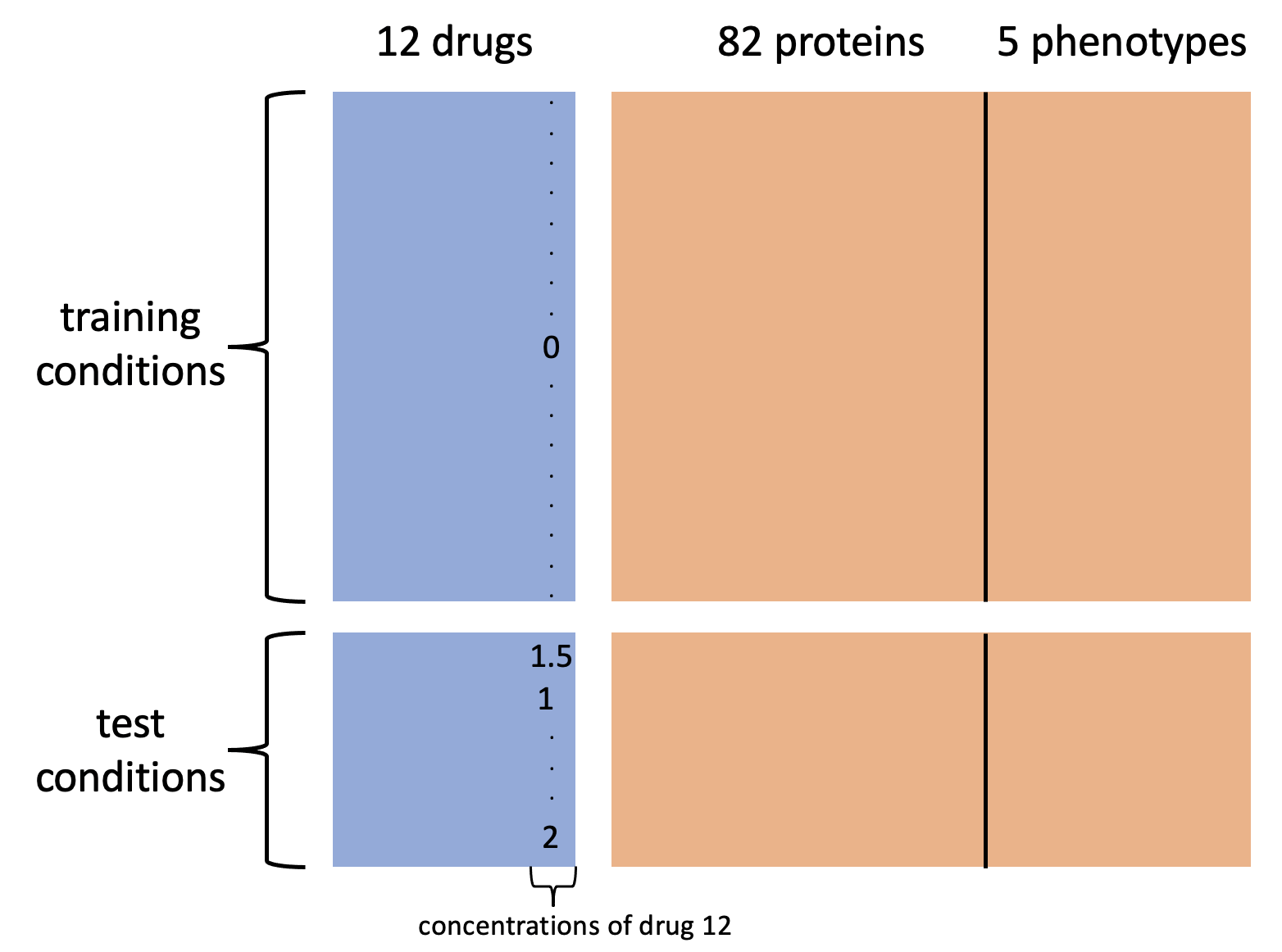}
   \caption{Melanoma Cell Line Data Structure}
   \label{fig:lodo-struct}
   \end{subfigure}%
   \begin{subfigure}{0.49\textwidth}
    \centering
    \includegraphics[width=0.95\linewidth]{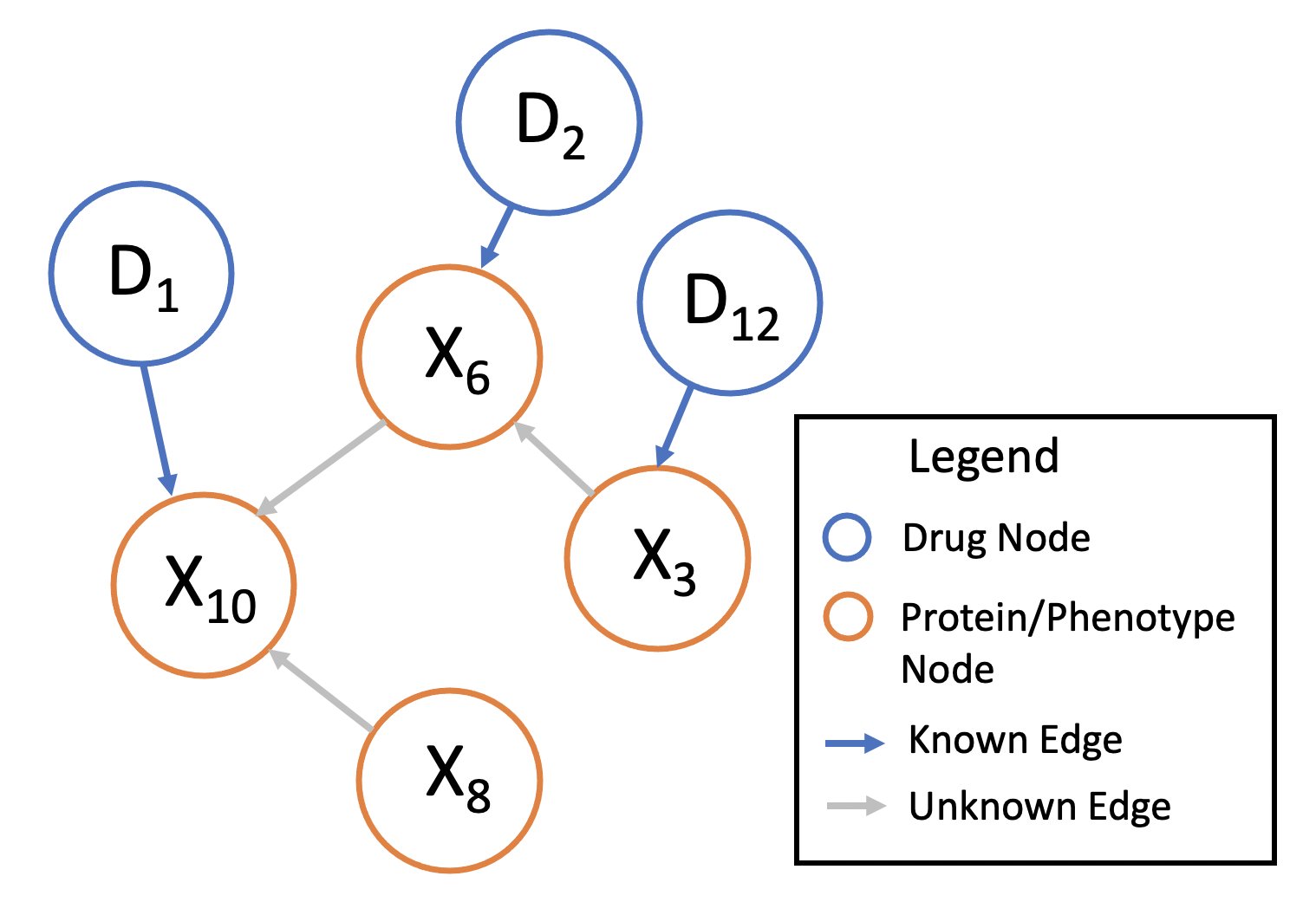}
    \caption{Causal Model}
    \label{fig:cm-dag}
   \end{subfigure}
\caption{a) Overview of perturbation data with a leave--one--drug--out (LODO) training / test set split. Drug 12 is never used in training. b) Causal graphical model for subset of drugs and responses (proteins and phenotypes). Drugs are exogenous variables with known targets, e.g. it is assumed known that drug 12 directly effects protein $X_3$.\label{fig:train-test}}
\end{figure}

%

In perturbation prediction, drug doses (blue columns) are used to predict protein and phenotype responses (orange columns). Data is divided into training and test sets. Using only the training data, a model is constructed which can predict protein/phenotypes from the drug concentrations. The performance of the model is then evaluated using the test set drug concentrations and protein/phenotypes. When the test set is a simple random sample of all perturbations, this setup matches the standard approach to fitting and validating predictive models. Models such as linear regression of response variables on the drug concentrations can be used.

In practice one would like to construct a model which can accurately predict the effect of untested drugs and in doing so identify perturbations with interesting responses for further follow up. Random fold cross validation does not represent this use case well because all drugs are used in training. A more challenging form of validation, leave--one--drug--out (LODO), more closely aligns with the intended scientific uses of the perturbation prediction model. Figure \ref{fig:train-test} left panel depicts a LODO training-test set split. Here drug 12 is left out of training set i.e. the concentration of drug 12 in the training data is always $0$ because drug 12 was never used in training perturbations. The test perturbations all use drug 12 so column 12 of the drug matrix is never $0$ in test. LODO prediction is challenging for regression models because there is no way for the model to determine the effect of drug 12 on the response variables. For example, coefficients in a linear regression of response on drugs will not be defined because the gram matrix is not invertible.

The direct targets of drugs are often known a priori. For example, a MEK inhibitor drug should directly reduce the expression of the MEK protein. Other changes in the system, could then be assumed to be a downstream effect of MEK inhibition. Using this information about drug targets, a causal model, which infers causal relations among the protein and phenotype response variables, can be used to predict responses in LODO validation. The approach is graphically summarized in Figure \ref{fig:train-test} right. For clarity only a small number of the drug and response variables are shown. Drugs (blue nodes) are known to target (blue arrows) particular proteins (orange nodes). For example drug $D_{12}$ targets protein $X_3$. The causal relations among proteins is unknown a priori (grey arrows). Training set perturbations can be used to identify and estimate causal effects among the proteins. Then the effect of an untested perturbation, e.g. drug 12, can be determined by first assuming that the direct effect of drug 12 will be on protein 3, and then propagating this effect through the inferred protein network.

\cite{yuan2021cellbox} developed an ordinary differential equation (ODE) based model termed Cellbox and tested it on the Melanoma cell line data, both using random fold and LODO forms of validation. Cellbox outperformed all competing algorithms on both tasks. In the following section, we derive analytic results relating Cellbox to the simpler approach of regression of response variables on drugs.

\hypertarget{cellbox}{%
\section{Causal versus Regression Models for Prediction}\label{cellbox}}

\subsection{Linear Regression Model}
\label{subsec:reg}

We consider a multivariate linear model for drug response prediction and the challenge this model faces with LODO validation. Let $\bs{X} \in \mathbb{R}^{n \times p}$ be the matrix of protein and phenotype responses (orange matrix from Figure \ref{fig:lodo-struct}) and $\bs{D} \in \mathbb{R}^{n \times q}$ be the matrix of drug concentrations (blue matrix from Figure \ref{fig:lodo-struct}). Consider a model
\begin{equation*}
\bs{X} = \bs{D}R + \delta
\end{equation*}
with $R \in \mathbb{R}^{q \times p}$ where $R_{ij}$ is the effect of drug $i$ on response $j$. The rows of $\delta \in \mathbb{R}^{n \times p}$ are independent and $\mathbb{E}[\delta] = 0$. Consider estimating $R$ with
\begin{equation}
\label{eq:regress}
\widehat{R} = \argmin{R} ||\bs{X} - \bs{D}R||_F^2 + \lambda ||R||_1
\end{equation}
where $||\cdot||_F$ indicates the Frobenius norm and $||R||_1 = \sum_{i,j} |R_{i,j}|$. The sparsity inducing penalty term $\lambda ||R||_1$ may be used to force each drug to effect only a small number of response variables. For a new condition with drug concentrations $d \in \mathbb{R}^q$ applied, the predicted response is $\widehat{x} = d^T\widehat{R}$.

For the Melanoma cell line data $q=12$ and $n=89$, so regularization will not generally be necessary to obtain a unique solution from Equation \ref{eq:regress}. However with LODO validation, one column of $\bs{D}$ will be identically $0$, implying that without regularization ($\lambda=0$), the objective function in Equation \eqref{eq:regress} will not have a unique solution. Specifically if drug $j$ is held out, then $||\bs{X} - \bs{D}R||_F^2$ will be insensitive to changes in $R_{j,\cdot}$, the $j^{th}$ row of $R$. A solution to this problem is to use some amount of regularization. If $\lambda > 0$, then $\widehat{R}_{j,\cdot} = 0$ because this will minimize the penalty portion of the objective function. While this produces a unique solution, it seemingly will not produce a good estimator because it is unlikely that drug $j$ actually has no effect on the response variables.

\subsection{Causal Model}
\label{sec:causal-model}

We now discuss Cellbox, a causal modeling strategy which can overcome some of the limitations of regression with LODO validation. Cellbox was introduced in \cite{yuan2021cellbox}. First we summarize the Cellbox modeling and fitting procedure, modifying notation in certain instances for clarity.\footnote{See Model Configuration section of METHOD DETAILS in Star Methods of \cite{yuan2021cellbox} for original exposition of model.}

Cellbox uses a system of ordinary differential equations (ODEs) to model how proteins and phenotypes influence each other across time. Let $x^k(t,\theta) \in \mathbb{R}^p$ be the log--normalized change at time $t$ (relative to time $0$) of a set of $p$ proteins and phenotypes under perturbation condition $k$. The unknown parameters $\theta$ control how proteins influence each other. In condition $k$, drug concentrations $d^k \in \mathbb{R}^q$ are applied. Let $B \in \mathbb{R}^{p \times q}$ be the direct effects of a drug on the system with $B_{ij}$ the effect of 1 unit of drug $j$ on protein $i$. Define $u^k = Bd^k$, the direct effect of applying drug concentrations $d^k$ to the system. Since $d^k$ and $B$ are assumed known, $u^k$ is known as well. 

Response $i$ (protein or phenotype) under perturbation $k$ is modeled by
\begin{equation}
\label{eq:cellbox}
\frac{\partial x^{k}_i(t,\theta)}{\partial t} = \epsilon_i \phi \left( \sum_{j \neq i}w_{ij}x_j^k(t,\theta) - u_i^{k}\right) - w_{ii} x_i^{k}(t,\theta).
\end{equation}
The parameter $\theta = (W,\epsilon)$ where $w_{ij}$ for $j \neq i$ represents the causal effect of $x_j$ on $x_i$, $w_{ii}$ characterizes the effect of decay (the tendency of protein $i$ to return to the original level before perturbation), and $\epsilon_i$ controls the saturation effect of the protein. Cellbox can be fit with several envelope functions $\phi$ including identity, clipped linear, and sigmoid.

In \cite{yuan2021cellbox} Cellbox was fit with response variables measured at a single time point 24 hours after perturbation initiation. It was assumed that by this time, the system has reached steady state. Define the observed changes of the proteins/phenotypes at 24 hours using $x^k_i$. The steady state (equilibrium) changes implied by the model is
\begin{equation*}
x^k_i(\theta) \equiv \lim_{t \rightarrow \infty} x^k_i(t,\theta).
\end{equation*}
To estimate parameters $\theta$, discrepancy between predicted responses ($x^k_i(\theta)$) and observed responses ($x^k_i$) was computed with a $L_1$ (lasso) penalty term to induce sparsity on the off--diagonal elements of $W$. Specifically, the Cellbox algorithm  seeks $\theta$ which minimizes
\begin{equation*}
L(\theta) = \sum_k \sum_i |x^k_i - x^k_i(\theta)|^2 + \lambda ||W - diag(W)||_1
\end{equation*}
where $diag(W) \in \mathbb{R}^{p \times p}$ is the diagonal component of $W$. Given a candidate $\theta$, an ODE solver can be used to approximate $x^k_i(\theta)$. Subsequently $\theta$ are updated using gradient descent with automatic differentiation to determine 
\begin{equation}
\label{eq:cellbox-opt}
\widehat{W},\widehat{\epsilon} = \argmin{\theta=(W,\epsilon)} L(\theta).
\end{equation}

We now show that using a linear envelope function $\phi$ and setting $\epsilon=1$, the parameters $\theta$ in Cellbox can be derived as the solution to a penalized regression model. This result is a consequence of the fact that linear ODEs have closed form solutions and do not require numerical integration. This result facilitates a comparison of Cellbox with models which regress responses on drugs directly.

\begin{Thm} \label{thm:cellbox-steady}
Suppose $\phi$ is a linear envelope function, $\epsilon=1$, and $W$ is invertible. Then
\begin{equation}
\label{eq:cell_model}
x^k(\theta) = (x^k_1(\theta),\ldots,x^k_p(\theta))^T =  -W^{-1}Bd^k
\end{equation}
and
\begin{equation}
\label{eq:cell_pen}
\widehat{W} = \argmin{W} ||\bs{X} - \bs{D}B^T(-W^{{-1}^T})||_F^2 + \lambda ||W - diag(W)||_1.
\end{equation}
\end{Thm}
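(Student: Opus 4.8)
The plan is to establish the two displayed equations in turn, starting from the linear reduction of \eqref{eq:cellbox}. First I would substitute $\phi=\mathrm{id}$ and $\epsilon=1$ into \eqref{eq:cellbox}, so that $\partial_t x^k_i(t,\theta) = \sum_{j\neq i} w_{ij} x^k_j(t,\theta) - u^k_i - w_{ii} x^k_i(t,\theta)$ is linear in the state. Stacking the $p$ scalar equations, this reads in matrix form $\partial_t x^k(t,\theta) = -W x^k(t,\theta) - u^k$, where $u^k = B d^k$ and $W$ is the coefficient matrix of the linearized system, with the diagonal/sign convention chosen so that its off-diagonal entries agree up to sign with the $w_{ij}$ and $\|W-\mathrm{diag}(W)\|_1$ coincides with the penalty appearing in $L(\theta)$. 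This is an inhomogeneous constant-coefficient linear ODE with initial condition $x^k(0,\theta)=0$ (recall $x^k$ is the change relative to time $0$), whose solution is $x^k(t,\theta) = (e^{-Wt}-I)\,W^{-1} u^k$; invertibility of $W$ is precisely what makes the particular solution $-W^{-1}u^k$ well defined. Letting $t\to\infty$ then yields $x^k(\theta) = -W^{-1} u^k = -W^{-1} B d^k$, which is \eqref{eq:cell_model}.

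For \eqref{eq:cell_pen} I would simply substitute the closed form \eqref{eq:cell_model} into the Cellbox objective $L(\theta)$. With $\epsilon$ fixed at $1$, only $W$ remains free, and the fidelity term $\sum_k\sum_i |x^k_i - x^k_i(\theta)|^2$ equals $\sum_k \|x^k - x^k(\theta)\|_2^2$. Writing this in the matrix notation of Section~\ref{subsec:reg}: the $k$-th row of $\bs{X}$ is $(x^k)^T$, the $k$-th row of $\bs{D}$ is $(d^k)^T$, and by \eqref{eq:cell_model} the $k$-th row of the matrix of fitted values is $(x^k(\theta))^T = -(d^k)^T B^T (W^{-1})^T$, i.e.\ the whole fitted matrix is $-\bs{D} B^T (W^{-1})^T = \bs{D}B^T(-W^{{-1}^T})$. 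Hence the condition-wise sum of squares collapses to the single Frobenius norm $\|\bs{X} - \bs{D}B^T(-W^{{-1}^T})\|_F^2$, and re-attaching the untouched penalty $\lambda\|W-\mathrm{diag}(W)\|_1$ and minimizing over $W$ gives exactly \eqref{eq:cell_pen}.

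The one step needing genuine care is the existence of the $t\to\infty$ limit defining $x^k(\theta)$: invertibility of $W$ alone guarantees a unique fixed point of the dynamics but not convergence to it, which in addition requires the eigenvalues of $W$ to have positive real part (equivalently, $-W$ stable). I would either fold this stability condition into the hypotheses, or---the cleaner route---argue directly from the fixed-point characterization: any steady state must satisfy $\partial_t x = 0$, i.e.\ $-W x^k(\theta) - u^k = 0$, and invertibility of $W$ forces $x^k(\theta) = -W^{-1} B d^k$ uniquely, so that whenever the limit exists it necessarily equals this value. Everything else---passing from the scalar ODEs to matrix form, solving the linear system, and repackaging the per-condition residual sum of squares as a Frobenius norm---is routine, the only traps being the sign convention on $\mathrm{diag}(W)$ and keeping the transposes straight when passing from $x^k(\theta)=-W^{-1}Bd^k$ to its row-vector form.
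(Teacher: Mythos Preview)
Your proposal is correct and reaches the same conclusions as the paper, but by a more direct route. The paper does not solve the inhomogeneous system $\dot x = Wx - u$ directly; instead it augments the state to $y=(d,x)$, obtaining a \emph{homogeneous} linear system $\dot y = Ay$ with block matrix $A=\begin{pmatrix}0&0\\B&W\end{pmatrix}$, writes $y(t)=e^{At}y(0)$, and then proves a separate lemma computing $e^{At}$ block-wise and showing $e^{Wt}\to 0$ under a negative-spectrum assumption on $W$. Your approach---write down the explicit solution $(e^{-Wt}-I)W^{-1}u$ of the inhomogeneous system from $x(0)=0$, or better still just set $\partial_t x=0$ and invoke invertibility of $W$---avoids the augmentation and the auxiliary lemma entirely, at no cost. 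You are also right that invertibility of $W$ alone does not force convergence of the trajectory; the paper's lemma quietly imports the needed spectral condition (``since $W\prec 0$, $\lambda_j<0$''), so your suggestion to either add stability to the hypotheses or argue purely from the fixed-point equation is exactly the gap you would have to patch, and the paper handles it in the same spirit. The second half of your argument (rewriting the per-condition residual sum as a Frobenius norm via row-stacking) is identical to the paper's.
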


Equation \ref{eq:cell_pen} shows that for a simplified version of Cellbox ($\phi$ identity and $\epsilon=1$), ODE solvers are not necessary for estimating parameters $\widehat{W}$ and making predictions. For a new combination of drugs $d \in \mathbb{R}^q$, Cellbox predicts a response $\widehat{x} = d^TB^T(-\widetilde{W}^{-1^T})$. Qualitatively, the direct effect of the drugs are first determined $d^TB^T$ and then these direct effects are propagated through the system by multiplying the direct effects by $-\widetilde{W}^{-1^T}$.

The relative merits of the Regression and Causal modeling strategies are summarized as follows:
\begin{itemize}
\item \textbf{p versus q}: The Causal model estimates $W$ which consists of $p^2$ parameters, corresponding to direct effects of all response variables on each other. The Regression model estimates $R$ which consists of $qp$ parameters, corresponding to the total effect of each of the $q$ drugs on the $p$ response variables. Thus when $q < p$, regression requires estimating fewer parameters. Further, the dimension of the column space of $\bs{D}B^T$ is bounded by $min(q,p)$. This implies that when $q < p$, unregularized estimates of $W$ are not possible. For the Melanoma data (where $p=87$ and $q=12$) the regression model requires estimation of far fewer parameters. The regression model may be fit without regularization while the causal model will need regularization in order for the objective function to have a unique minimum.
\item \textbf{B Assumption}: The causal model requires knowledge of $B$, the direct effects of interventions on response variables. If $B$ is unknown or contains a large amount of uncertainty, then regression may be preferred.
\item \textbf{Generalization to New Drugs}: Consider the unregularized version of Estimator \eqref{eq:cell_pen}. The optimization problem will have a unique solution whenever $\bs{D}B^T \in \mathbb{R}^{n \times p}$ is full column rank. This can occur even if a particular drug is never used (one column of $\bs{D}$ is identically $0$). This represents an advantage over regression approaches. In contrast, the Regression objective function does not have a unique minimum with untested drugs when not using regularization. With regularization, the effect estimates of the untested drug is $0$.
\item \textbf{Interpretability:} The causal model is more interpretable because it estimates matrix $W$ (which in the Melanoma application is a gene regulatory network) which encodes how response variables causally effect each other.
\end{itemize}

While the analytic results comparing causal modeling to regression are only applicable to the linear case, several of these qualitative conclusions apply more generally. For example, regardless of what form of Causal model is used (e.g. non--linear version of Cellbox), inferences will be sensitive to misspecification of $B$ (the direct targets of the drugs), while regression (possibly non--linear) is not. Regression models, even non--linear ones, will not be able to generalize well to predict the effect of drugs not used in the training data.

Finally we note that the implementation of Cellbox in \cite{yuan2021cellbox} set elements of $W$ to $0$ which represent phenotype to protein causal effects. This is accomplished by restricting the domain of the parameter optimization in Equation \eqref{eq:cellbox-opt}. This enforces the domain knowledge that proteins may influence phenotypes but not vice versa. For clarity of exposition, we do not impose the conditions here or in the simulations since they are not directly relevant for understanding the relationship between regression and causal predictive models. However in the application to the Melanoma cell line, Section \ref{mela}, we follow \cite{yuan2021cellbox} and impose the restriction.

\subsection{Static Time Causal Structural Equation Modeling Interpretation}

We now show that Model \eqref{eq:cell_model} has a static time interpretation in terms of a linear structural equation model and directed acyclic graphs (DAGs). Let $x^k(A) \in \mathbb{R}^p$ represent noise--free responses when drug $d^k \in \mathbb{R}^q$ is applied to the system. As before, the matrix $B \in \mathbb{R}^{p \times q}$ represents known causal effects of drugs on response variables. The matrix $A \in \mathbb{R}^{p \times p}$ encodes the causal structure of $x^k$ with element $A_{ij}$ representing the causal effect of $x_j$ on $x_i$. Assuming $A$ represents a directed acyclic graph, one may write
\begin{equation}
\label{eq:seq-interp}
x^k_i(A) \gets \sum_{j=1}^p A_{ij}x^k_j(A) + (Bd^k)_i.
\end{equation}
One can write this system of equation as
\begin{equation*}
x^k(A) = Ax^k(A) + Bd^k.
\end{equation*}
Since $A$ represents a DAG, the matrix $I-A$ is invertible, so we have
\begin{equation}
\label{eq:dag-cellbox}
x^k(A) = (I-A)^{-1}Bd^k.
\end{equation}
Note that Model \eqref{eq:dag-cellbox} is identical to Model \eqref{eq:cell_model} with $A=I + W^T$. Thus the simplified Cellbox model ($\phi$ is identity and $\epsilon=1$), contains the linear DAG model as a special case. The simplified Cellbox model is more general than the linear DAG because it can contain self loops and cycles. For simulations in Section \ref{sim} we use the $A$ parameterization defined here.

%
%
%

\hypertarget{sim}{%
\section{Simulation}\label{sim}}

We conduct a simulation to compare the relative performance of the regression estimator defined in Equation \eqref{eq:regress} and the causal estimator defined in Equation \eqref{eq:cell_pen}. We consider a problem with $p=5$ response variables and a total of $q=15$ drugs. Since $q > p$ (number of drugs is greater than number of response variables), regularization is not necessary ($\lambda$ is set to $0$ in all the simulations).

\begin{figure}
\centering
\includegraphics[width=0.9\linewidth]{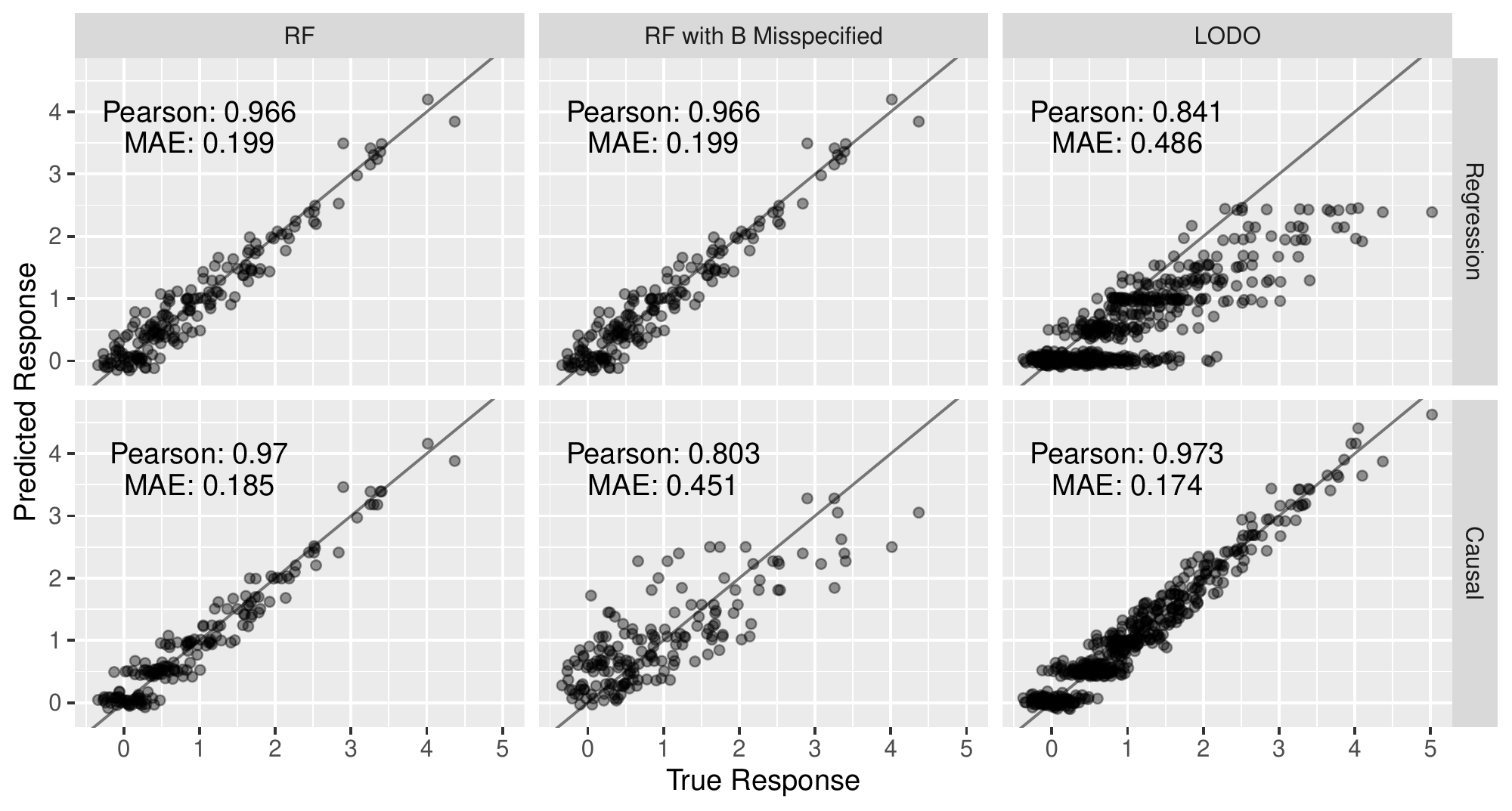}
\caption{Comparison of performance of regression model to causal model in various settings. The x-axis are the true responses and the y-axis is the predicted response. Regression and causal modeling perform similarly for random fold (RF) cross validation. For RF cross validation with $B$ misspecified, the causal model, which uses $B$, performs poorly. For leave--one--drug--out cross validation, the regression model performs poorly because it cannot model the effect of the left out drug on the responses.\label{fig:sim1-4}}
\end{figure}

Five drugs target a single response variable and $10$ drugs target two of the response variables. Drugs with a single target have a strength of $1$ while drugs with $2$ targets have a strength of $0.5$ for each target. All possible combinations of 2 drugs are applied to the system, thus there are a total of $n= {15 \choose 2} = 105$ observations.

The variable $X_1$ has a causal effect of $1.6$ on $X_2$ and $1.2$ on $X_3$. The variable $X_3$ has a causal effect of $2$ on $X_4$. All other causal effects among the response variables are $0$. The structure of $\bs{D}$, $B$, and $A$ is given in Equations \eqref{eq:sim-eq}.

\begin{equation}
\label{eq:sim-eq}
\bs{D} = 
\begin{pmatrix}
1 & 1 & 0 & 0 & \cdots & 0 & 0\\
1 & 0 & 1 & 0 & \cdots & 0 & 0\\
\vdots & \vdots & \vdots & \vdots & \vdots & \vdots & \vdots \\
1 & 0 & 0 & 0 & \cdots & 0 & 1\\
0 & 1 & 1 & 0 & \cdots & 0 & 0\\
0 & 1 & 0 & 1 & \cdots & 0 & 0\\
\vdots & \vdots & \vdots & \vdots & \vdots & \vdots & \vdots \\
0 & 1 & 0 & 0 & \dots & 0 & 1\\
\vdots & \vdots & \vdots & \vdots & \vdots & \vdots & \vdots \\
0 & 0 & 0 & 0 & \dots & 1 & 1\\
\end{pmatrix} \in \mathbb{R}^{105 \times 15} \, \, \, \,
B^T = 
\begin{pmatrix}
1 & 0 & 0 & 0 & 0 \\
0 & 1 & 0 & 0 & 0 \\
0 & 0 & 1 & 0 & 0 \\
0 & 0 & 0 & 1 & 0 \\
0 & 0 & 0 & 0 & 1 \\
\frac{1}{2} & \frac{1}{2} & 0 & 0 & 0\\
\frac{1}{2} & 0 & \frac{1}{2} & 0 & 0\\
\vdots & \vdots & \vdots & \vdots & \vdots \\
0 & 0 & 0 & \frac{1}{2} & \frac{1}{2}
\end{pmatrix} \in \mathbb{R}^{15 \times 5}  \, \, \, \, 
%
A = \begin{pmatrix}
0 & 0 & 0 & 0 & 0\\
1.6 & 0 & 0 & 0 & 0\\
1.2 & 0 & 0 & 0 & 0\\
0 & 0 & 2 & 0 & 0\\
0 & 0 & 0 & 0 & 0
\end{pmatrix}
\end{equation}

We simulate responses using the structural equation model in Equation \eqref{eq:seq-interp} with

\begin{equation*}
\bs{X} = \bs{D}B^T\left(I-A\right)^{-1} + \delta_X
\end{equation*}
where $\delta_X \in \mathbb{R}^{105 \times 5}$ with all elements independent distributed $N(0,0.2^2)$. We fit the regression and causal estimators under three settings:
\begin{itemize}
\item \textbf{Random Fold (RF):} The data is divided randomly into 2/3 training and 1/3 test. Since the training--test set split is random, every drug is used in training.
\item \textbf{RF with B Misspecified:} The training--test set split is identical to RF. However the $B$ matrix (direct effect of drugs) is misspecified. Instead of using the correct $B$, the 10 drugs with 2 targets are assumed to influence their targets with a strength of $1$ (rather than the correct value of $0.5$).
\item \textbf{Leave-one-drug-out (LODO):} One drug is left out of the training set and used as test. For the regression estimator, the coefficient on the left out drug is set to $0$.
\end{itemize}

Results are summarized in Figure \ref{fig:sim1-4}. The true response values are plotted on the x--axis and the predicted response values are plotted on the y--axis. High correlations imply that the estimator is performing well in the respective setting. For Random Fold (RF) cross validation, both the Regression and Causal estimators perform well. In the RF with B Misspecified setting, Regression performs well and in fact makes identical predictions to the RF cross validation case because the Regression estimator does not depend on $B$. In contrast, the Causal estimator performs poorly because it uses an incorrectly specified $B$. Finally in the LODO setting, the Regression estimator performs poorly because it incorrectly infers that the effect of the held out drug on all response $0$. In contrast, the causal estimator performs well because it models the causal relations among the response variables which enables it to generalize predictions to untested drugs.

The Causal model produces an estimate of $A$. The true $A$ and the estimates $\widehat{A}$ for each validation setting are displayed graphically in Figure \ref{fig:sim1-4}. Note that LODO produces 15 estimates $\widehat{A}$ (one for each held out drug). We plot only one of them here. Edge widths are proportional to size of the coefficient estimate. For visual clarity, small effects (coefficients less than 0.2 in absolute size) are not displayed. The Random Fold $\widehat{A}$ in Figure \ref{fig:network_rf} and the LODO $\widehat{A}$ in Figure \ref{fig:network_LODO} closely resemble the true $A$ in Figure \ref{fig:network_true}. In contrast, Figure \ref{fig:network_RF_missB} shows that when $B$ is misspecified the resulting $\widehat{A}$ is a poor estimate.

\begin{figure}
 \centering
  \begin{subfigure}{0.4\textwidth}
   \centering
   \includegraphics[width=0.9\linewidth]{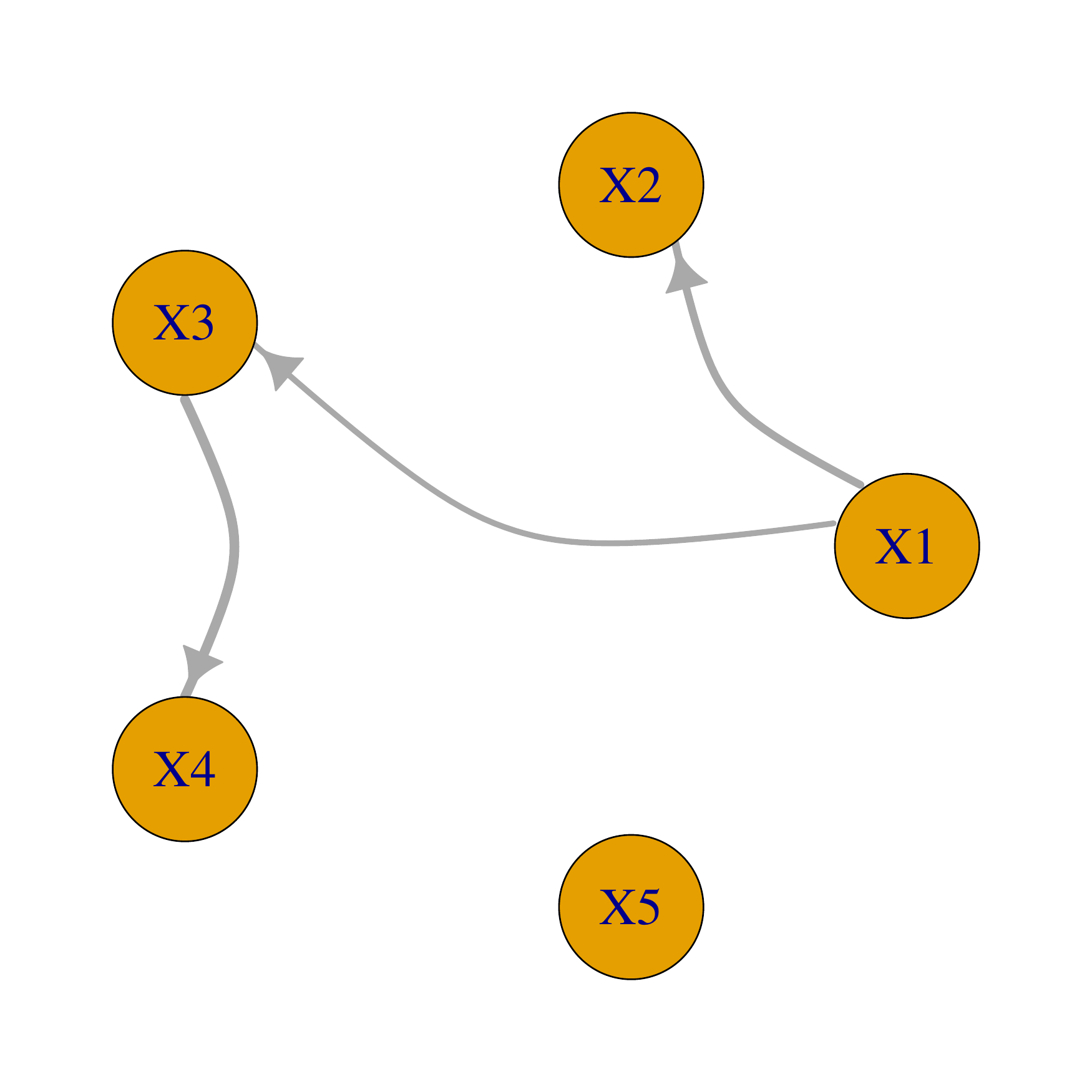}
   \caption{True Network}
   \label{fig:network_true}
   \end{subfigure}%
   \begin{subfigure}{0.4\textwidth}
    \centering
    \includegraphics[width=0.9\linewidth]{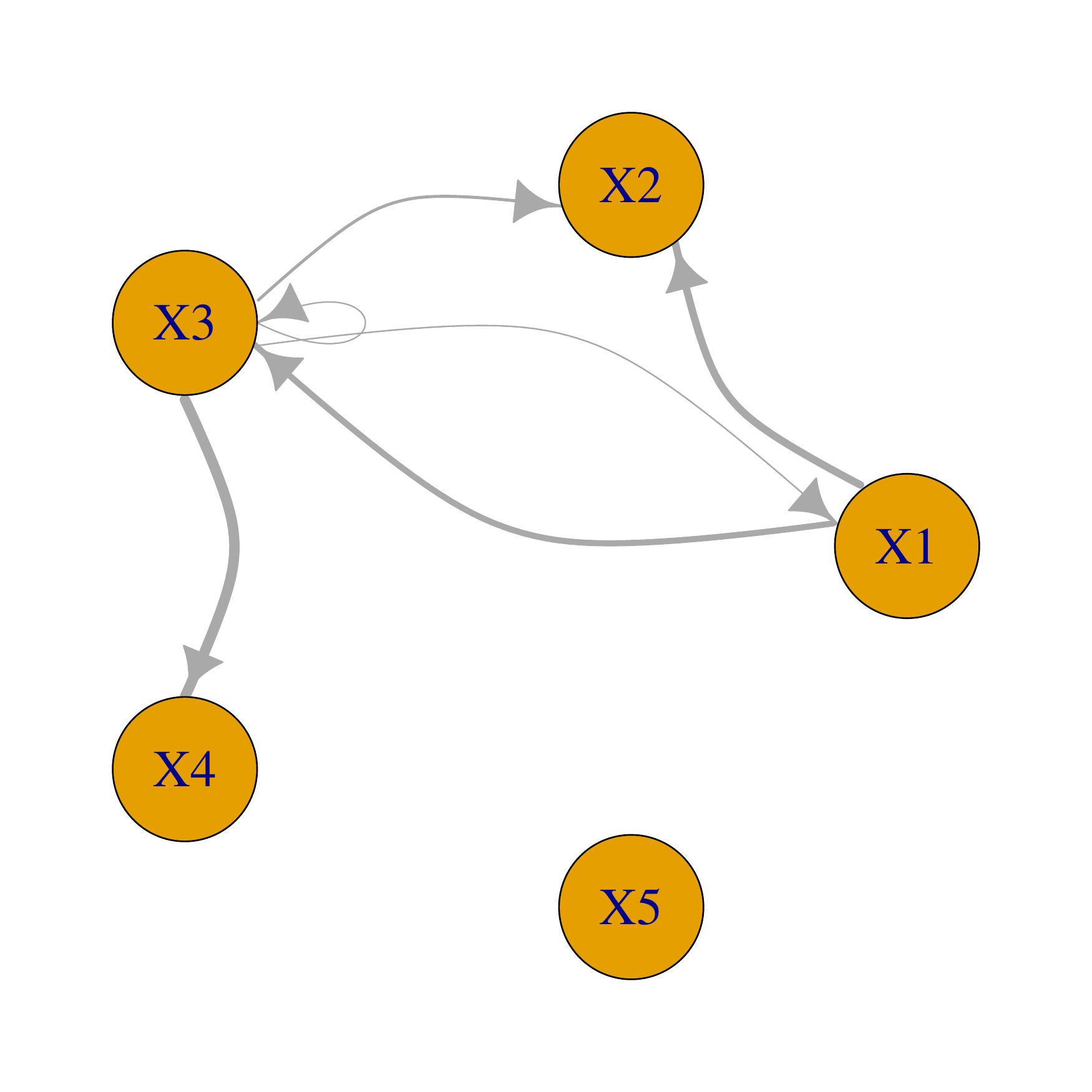}
    \caption{Random Fold}
    \label{fig:network_rf}
   \end{subfigure}
   \begin{subfigure}{0.4\textwidth}
    \centering
    \includegraphics[width=0.9\linewidth]{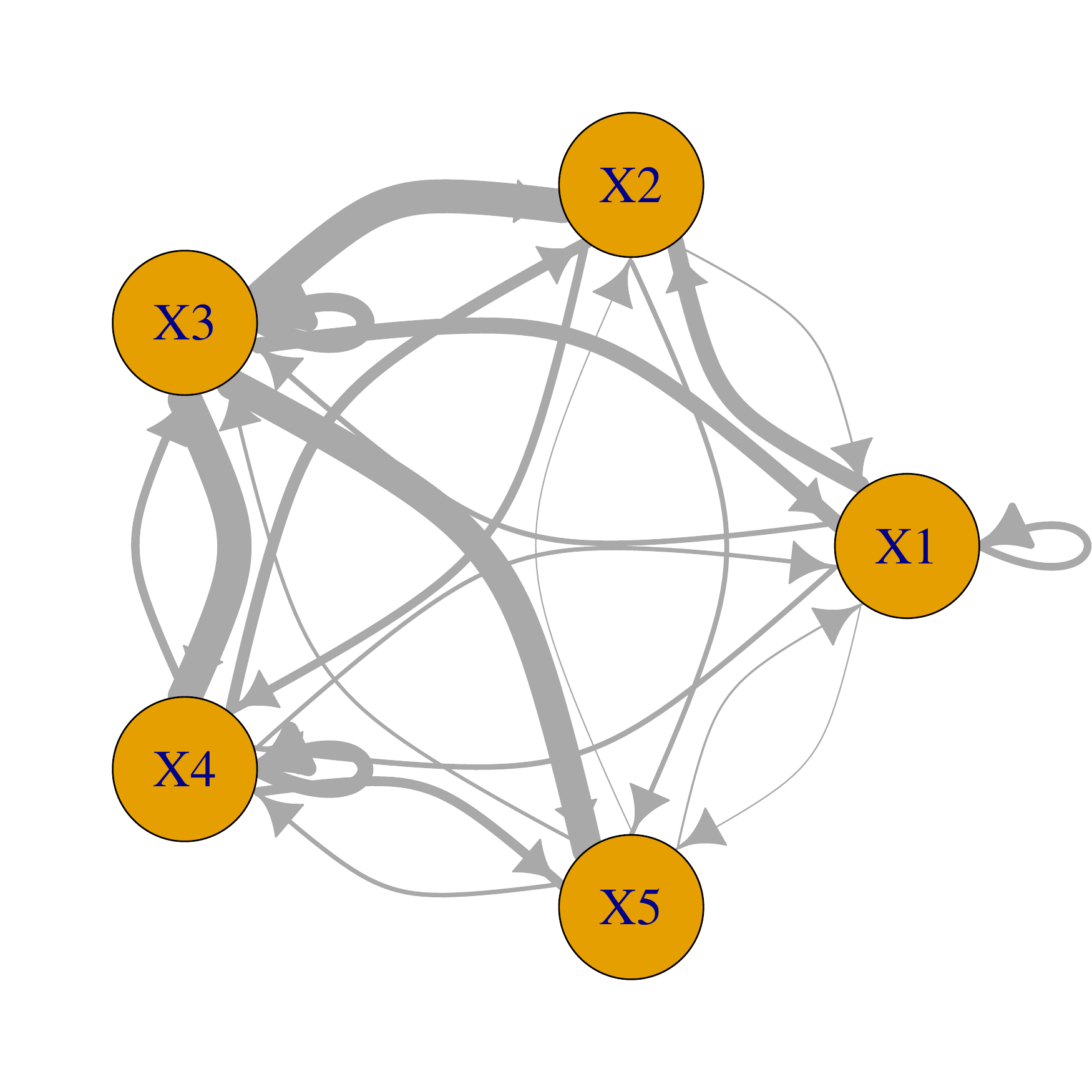}
    \caption{Random Fold, Misspecified B}
    \label{fig:network_RF_missB}
   \end{subfigure}
   \begin{subfigure}{0.4\textwidth}
    \centering
    \includegraphics[width=0.9\linewidth]{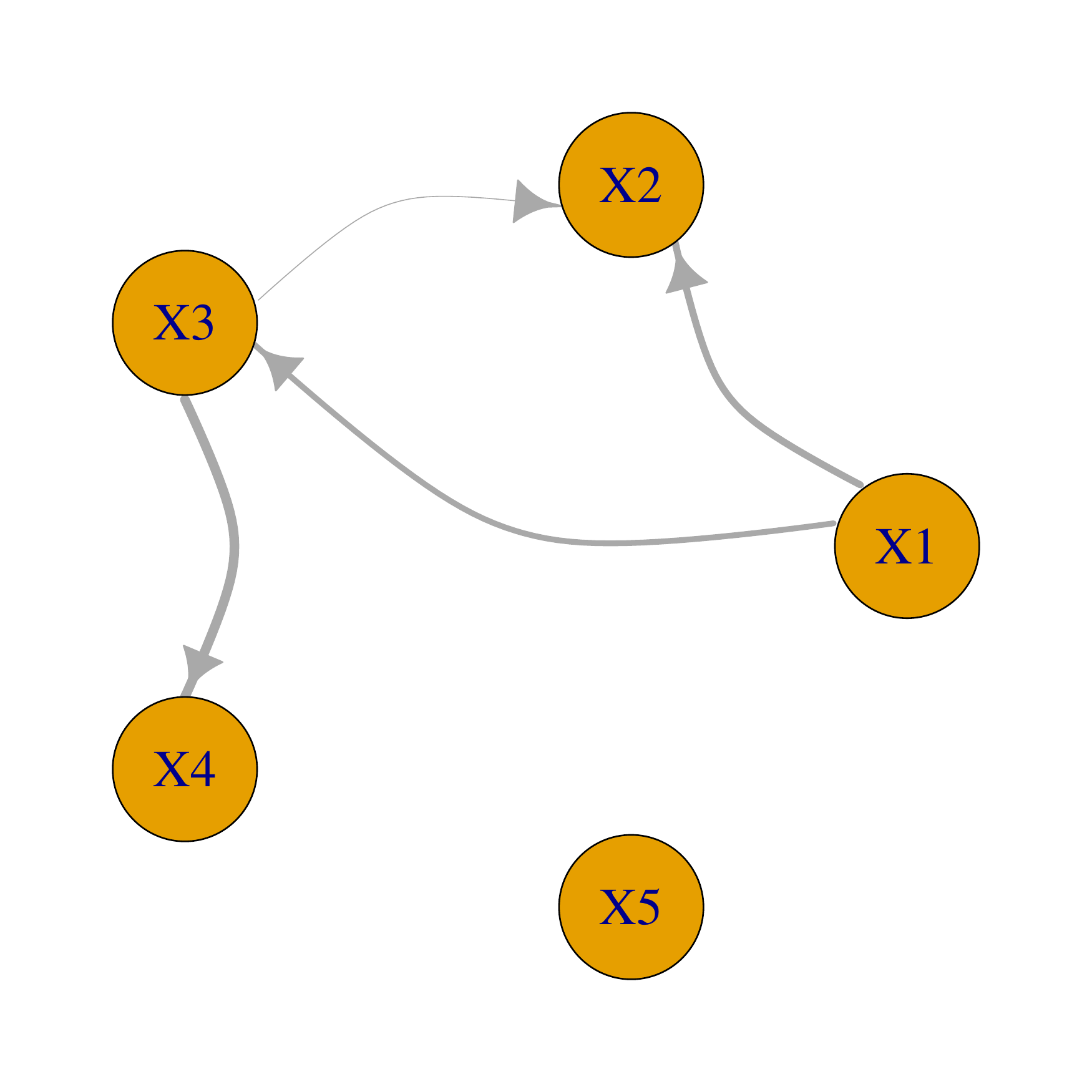}
    \caption{LODO}
    \label{fig:network_LODO}
   \end{subfigure}
\caption{True network and estimated networks for different simulation settings. For Random Fold and LODO, the estimated $A$ is quite close to the true $A$. For Random Fold with Misspecified $B$, the estimated $A$ contains many erroneous edges.}
\end{figure}

%
%
%
%
%
%
%
%


%
%
%

\hypertarget{mela}{%
\section{Melanoma Cell Line Perturbation Prediction}\label{mela}}

We compare Cellbox and Linear Regression for prediction of protein and phenotype responses in the Melanoma data set introduced in Section \ref{overview}. The two validation procedures we describe below follow the procedures in \cite{yuan2021cellbox}. Cellbox is implemented with a sigmoid activation function and tuning parameter selection as described in \cite{yuan2021cellbox}.

\subsection{Random Fold Cross Validation}

The 89 experimental conditions are split into 70\% training (62 conditions) and 30\% testing (27 conditions). Models are fit on the training set and used to predict the responses on the test set. This process is repeated 1000 times and the predictions averaged across these runs. Predicted responses versus experiment responses are plotted for Cellbox in Figure \ref{fig:test1} and Linear Regression in Figure \ref{fig:sub2}. The predictions from the linear model show a stronger correlation with the response than Cellbox (Pearson's correlation of 0.947 versus 0.926) and lower mean absolute error (0.093 versus 0.105). As discussed in Section \ref{sec:causal-model}, random fold cross validation favors regression models (relative to leave one drug out) because the regression model estimates fewer parameters and does not require regularization.

\subsection{Leave One Drug Out}

We now consider the more challenging Leave One Drug Out (LODO) validation where a drug is held out of training. For example, if the drug aMEK is held out, the training data is all conditions with aMEK concentration equal to 0 and the test set is all conditions where aMEK has been applied, either as monotherapy or in combination with other drugs. Since there are 12 drugs, there are 12 training--test set pairs.

The unregularized regression optimization problem from Equation \eqref{eq:regress} ($\lambda=0$) does not have a unique minimizer because the training data design matrix $\bs{D}$ is not full column rank. This is because the column of $\bs{D}$ corresponding to the held out drug is the $0$ vector. A regularized fit of Equation \eqref{eq:regress} (i.e. $\lambda \neq 0$) will result in a coefficient estimate for the left out drug of $0$. Fitting this estimator would involve tuning parameter selection with a method such as cross--validation. We choose a simpler approach of setting the coefficient for the left out drug to $0$ and fitting the unregularized estimator to the remaining columns of $\bs{D}$. This approach does not require tuning parameter selection. The effect is to assume that the drug held out has no effect on any of the response variables. This represents a crude benchmark model rather than an empirically motivated model assumption.

For each test set, we compute the correlation between the true responses and the predicted responses. This results in 12 correlations for Cellbox and Linear Regression. Figure \ref{fig:lodo} displays these correlations. Linear Regression slightly outperforms Cellbox with average correlation coefficient of $0.784$ as compared with $0.780$ for Cellbox.

\begin{figure}
 \centering
  \begin{subfigure}{0.33\textwidth}
   \centering
   \includegraphics[width=0.9\linewidth]{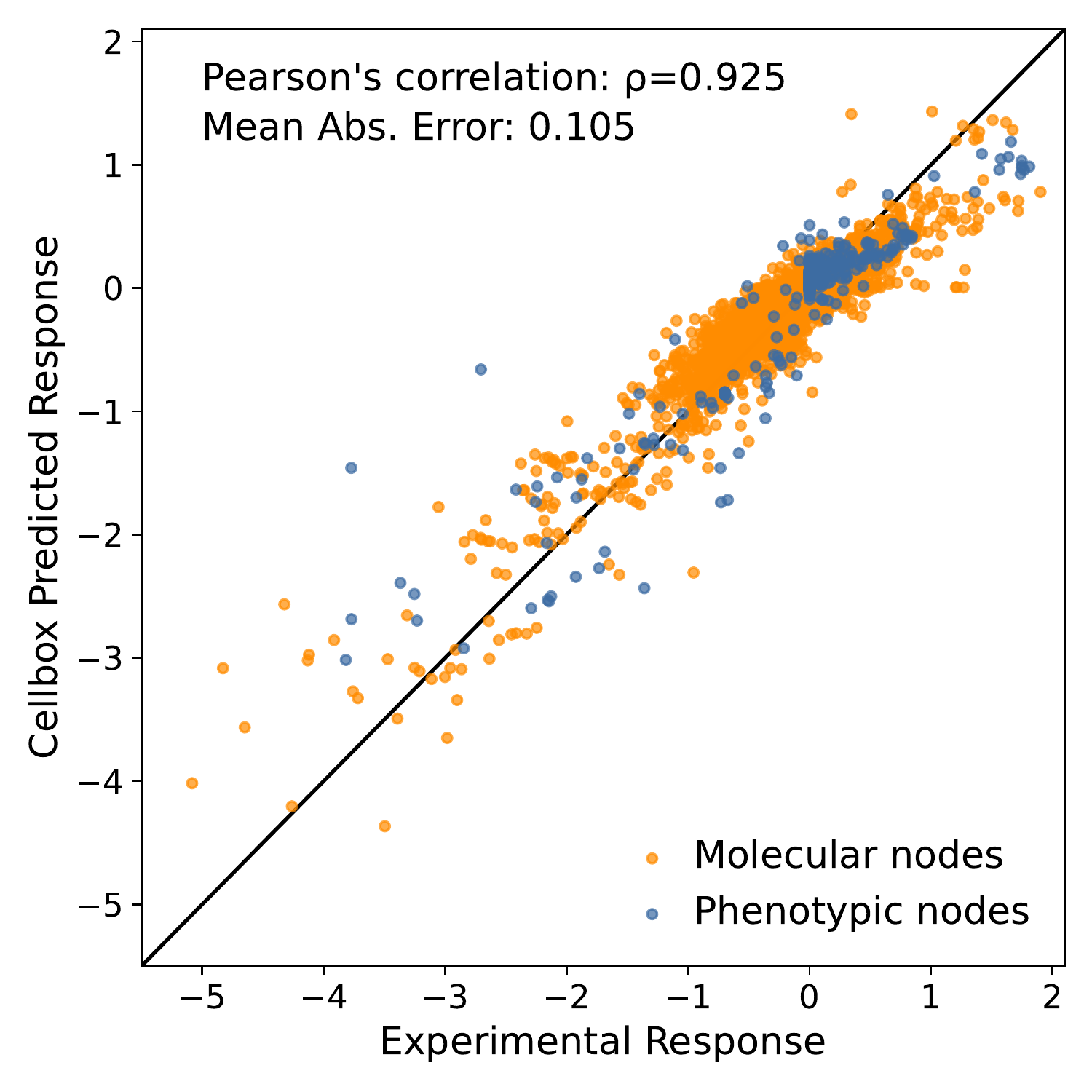}
   \caption{CellBox with RF}
   \label{fig:test1}
   \end{subfigure}%
   \begin{subfigure}{0.33\textwidth}
    \centering
    \includegraphics[width=0.9\linewidth]{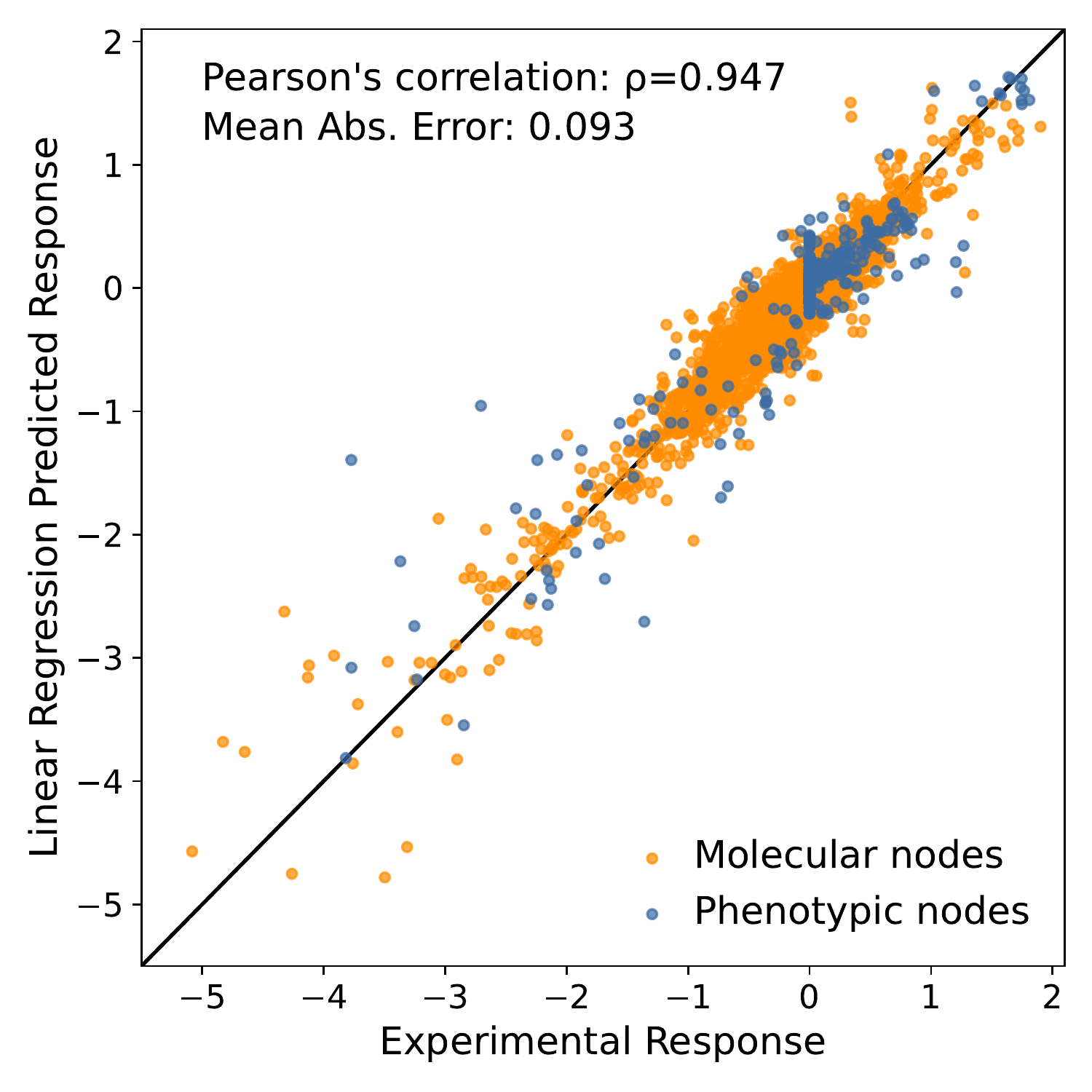}
    \caption{Linear Regression with RF}
    \label{fig:sub2}
   \end{subfigure}
   \begin{subfigure}{0.33\textwidth}
    \centering
    \includegraphics[width=0.9\linewidth]{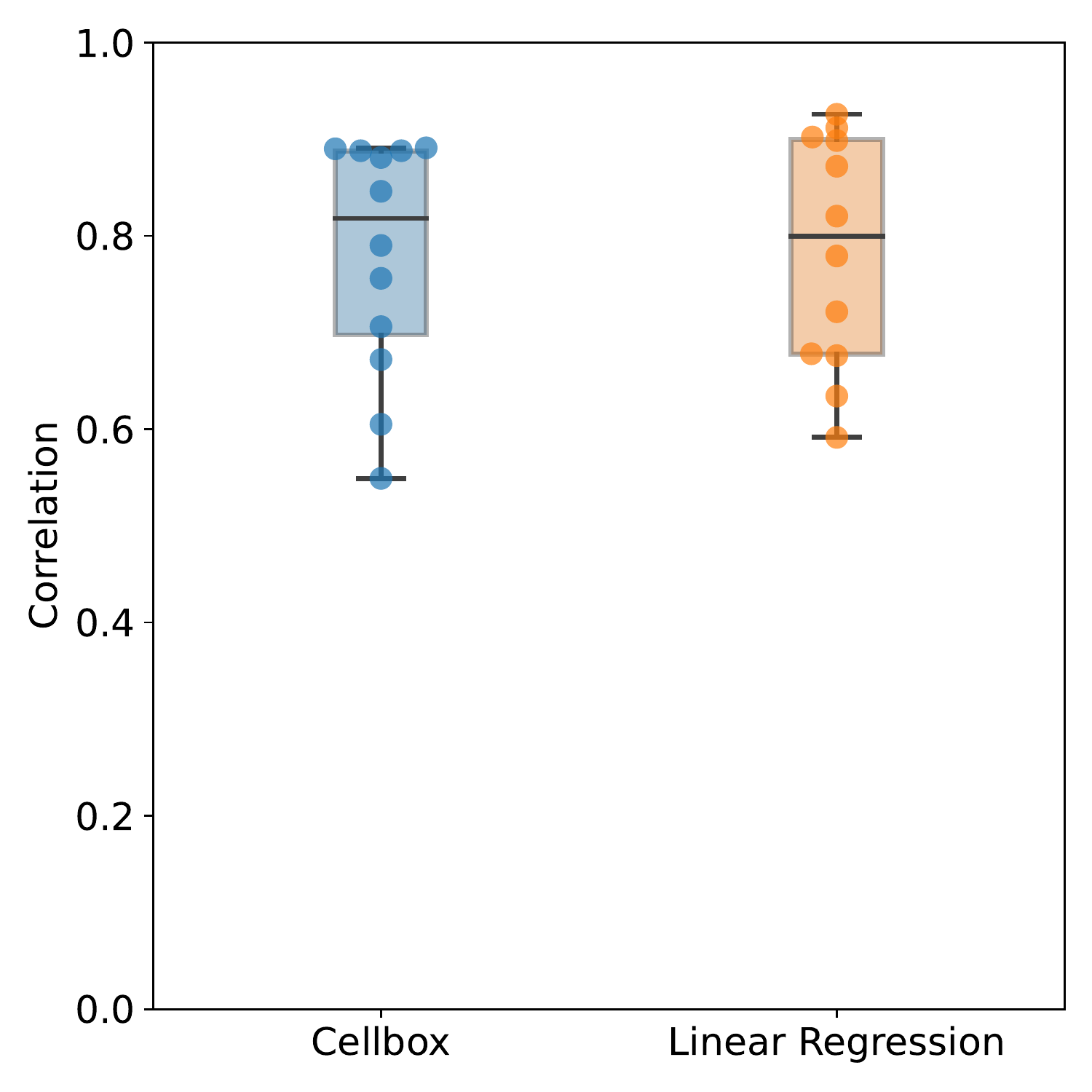}
    \caption{LODO}
    \label{fig:lodo}
   \end{subfigure}
\caption{Cellbox and Linear Regression performance on Random Fold (RF) and Leave One Drug Out (LODO) validation. Linear Regression has better performance in both settings.}
\end{figure}

%
%
%

\hypertarget{discussion}{%
\section{Discussion}\label{discussion}}

The field of causal inference has historically focused on parameter estimation and hypothesis testing. Recently, several works have explored using causal models for prediction \citep{versteeg2019boosting,meinshausen2016methods,yuan2021cellbox,long2022sample}. Prediction performance is highly relevant for several scientific applications including cell--line perturbation response prediction. For causal prediction models to provide meaningful scientific insight, it is critical to understand their relationship with non--causal regression approaches and appropriately benchmark models when assessing performance.

Here we derived the first analytic results comparing a recently proposed causal prediction model, Cellbox, with regression models. The analytic results and simulations facilitated an improved understanding of the strengths and weaknesses of the two approaches to prediction. In brief, regression models are simpler to fit but lack an ability to extrapolate to new data settings, such as prediction of response to a drug not used in the training set.

Cellbox obtained state of the art performance on a Melanoma cell line perturbation data set, outperforming a Belief Propagation algorithm, a deep learning Neural Network (NN), and a co--expression model. Here we demonstrated that Cellbox, and hence all the competitor methods, failed to outperform linear regression in either random fold or leave--one--drug--out (LODO) cross validation. The latter finding is particularly surprising because this is a setting which favors causal modeling approaches. These results highlight that simple modeling strategies can be the most effective and are critically important when benchmarking performance of new models.

The Melanoma perturbation data set used here is relatively small. Larger perturbation experiments test hundred or thousand of perturbations across dozens of cell lines with responses measured at multiple time points \citep{subramanian2017next,zhao2020large}. These data sets are likely to be more favorable to causal modeling strategies because they may contain sufficient information to identify and constrain model parameters.

\hypertarget{proofs}{%
\section{Proofs}\label{proofs}}

\hypertarget{cellbox-steady}{%
\subsection{Proof of Theorem \ref{thm:cellbox-steady}}\label{cellbox-steady}}

With the identity envelope $\phi(\cdot) = \cdot$ and $\epsilon_i = 1$, Equation \eqref{eq:cellbox} simplifies to
\begin{equation}
\label{eq:cellbox-linear}
\frac{\partial x^{k}(t,\theta)}{\partial t} = Wx^k(t) - u^{k} = Wx^k(t) - Bd^k.
\end{equation}
Now rewrite the ODE to explicitly include drug (condition) nodes. Let $d^k(t) \in \mathbb{R}^q$ indicate the concentrations of the $q$ drugs at time $t$. Since drug concentrations are assumed constant $d^k(t) = d^k(0) = d^k$. Define
\begin{equation*}
y^k(t,\theta) = \begin{pmatrix}
d^k(t) \\
x^k(t,\theta)
\end{pmatrix} \in \mathbb{R}^{q + p}
\end{equation*}
where
\begin{equation}
\label{eq:cell_drug}
\frac{\partial y^k(t,\theta)}{\partial t} = \underbrace{\begin{pmatrix}
0 & 0\\
B & W
\end{pmatrix}}_{\equiv A}
y^k(t) = \begin{pmatrix}
0\\
Wx^k(t) + Bd^k
\end{pmatrix}.
\end{equation}
There is a simple closed form solution for the system at time $t$, specifically
\begin{equation*}
y^k(t,\theta) = e^{At}y^k(0,\theta).
\end{equation*}
See \cite{adkins2015ordinary} (Section 9.5 Theorem 2) for a derivation of this result. Further by Lemma \ref{cellbox-lemma}
\begin{equation*}
y^k(\theta) \equiv \lim_{t\rightarrow \infty} y^k(t,\theta) =  \begin{pmatrix}
d(0)^k \\
-W^{-1}Bd(0)^k
\end{pmatrix}.
\end{equation*}

Now we have
\begin{align*}
\widehat{W} &= \argmin{W} \sum_k \sum_i |x^k_i - x^k_i(\theta)|^2 + \lambda ||W - diag(W)||_1\\
&= \argmin{W} \sum_k ||x^k - x^k(\theta)||^2_2 + \lambda ||W - diag(W)||_1\\
&= \argmin{W} ||\bs{X} - \bs{D}B^T(-W^{-1^T})||_F^2 + \lambda ||W - diag(W)||_1
\end{align*}

\hypertarget{lemmas}{%
\subsection{Proof of Lemmas}}

\begin{Lem} \label{cellbox-lemma}
\begin{equation*}
e^{At} = \begin{pmatrix}
I & 0\\
-W^{-1}B & 0
\end{pmatrix}.
\end{equation*}
\end{Lem}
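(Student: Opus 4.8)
The plan is to compute the matrix exponential $e^{At}$ explicitly by exploiting the block structure of $A = \begin{pmatrix} 0 & 0 \\ B & W \end{pmatrix}$, and then pass to the limit $t \to \infty$. (As written, the statement should be read as a claim about $\lim_{t\to\infty} e^{At}$, since a matrix exponential depends on $t$; this limiting matrix is exactly what is invoked in the proof of Theorem \ref{thm:cellbox-steady}.)

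First I would establish by induction that $A^n = \begin{pmatrix} 0 & 0 \\ W^{n-1}B & W^n \end{pmatrix}$ for every $n \geq 1$; the inductive step is the one-line block product $\begin{pmatrix} 0 & 0 \\ B & W \end{pmatrix}\begin{pmatrix} 0 & 0 \\ W^{n-1}B & W^n \end{pmatrix} = \begin{pmatrix} 0 & 0 \\ W^{n}B & W^{n+1} \end{pmatrix}$. Substituting into the (absolutely, in operator norm, convergent and hence freely rearrangeable) series $e^{At} = I + \sum_{n \geq 1}\frac{t^n}{n!}A^n$ and collecting the four blocks gives a top row equal to $(I \ \ 0)$, a bottom-right block $\sum_{n\geq 0}\frac{t^n}{n!}W^n = e^{Wt}$, and a bottom-left block $\bigl(\sum_{n\geq 1}\frac{t^n}{n!}W^{n-1}\bigr)B$. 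Using invertibility of $W$, the last sum is $W^{-1}\sum_{n\geq 1}\frac{t^n}{n!}W^n = W^{-1}(e^{Wt}-I)$, so
\[
e^{At} = \begin{pmatrix} I & 0 \\ W^{-1}(e^{Wt}-I)B & e^{Wt} \end{pmatrix}.
\]
An equivalent route that avoids the series is to write $e^{At}$ in $2\times 2$ block form and solve the linear ODE $\tfrac{d}{dt}e^{At} = A\,e^{At}$ block by block: the vanishing top row of $A$ forces the top blocks to be the constants $I$ and $0$, while the bottom blocks satisfy $\dot S = WS$, $\dot R = B + WR$ with $S(0)=I$, $R(0)=0$, yielding $S(t)=e^{Wt}$ and $R(t)=W^{-1}(e^{Wt}-I)B$.

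Finally I would let $t \to \infty$. This requires $e^{Wt} \to 0$, i.e. every eigenvalue of $W$ has strictly negative real part — which is precisely the condition under which the steady state $x^k(\theta) = \lim_{t\to\infty} x^k(t,\theta)$ referenced in Theorem \ref{thm:cellbox-steady} exists, so it is harmless to assume it here. Under this condition the bottom-right block vanishes and the bottom-left block tends to $W^{-1}(-I)B = -W^{-1}B$, giving $\lim_{t\to\infty} e^{At} = \begin{pmatrix} I & 0 \\ -W^{-1}B & 0 \end{pmatrix}$, as claimed.

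The only genuine subtlety — and the step I would be most careful with — is this spectral (stability) hypothesis on $W$: invertibility alone, as literally stated in the Lemma and Theorem, does not make $e^{Wt}$ converge as $t\to\infty$. The cleanest remedy is to phrase the Lemma as the limit statement above and note that stability of $W$ is already implicitly needed for the equilibrium $x^k(\theta)$ to be well defined. Everything else is a routine block computation with no analytic obstacles.
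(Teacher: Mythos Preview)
Your proposal is correct and follows essentially the same route as the paper: compute $A^n$ blockwise, sum the exponential series to obtain $e^{At} = \begin{pmatrix} I & 0 \\ W^{-1}(e^{Wt}-I)B & e^{Wt} \end{pmatrix}$, and then let $t\to\infty$. You are more careful than the paper in flagging that the statement is really a limit and that one needs a spectral stability condition on $W$ (the paper tacitly invokes $W\prec 0$ and an eigenbasis), and your alternative ODE derivation is a nice bonus, but the core argument is the same.
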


\begin{proof}

Note that
\begin{align*}
e^{At} &= I + \sum_{i=1}^\infty \frac{A^it^i}{i!}\\
&= I + \sum_{i=1}^\infty \frac{\begin{pmatrix}
0 & 0\\
W^{i-1}B & W^i
\end{pmatrix}t^i}{i!}\\
&= \begin{pmatrix}
I & 0\\
\sum_{i=1}^{\infty} \frac{W^{i-1}Bt^i}{i!} & \sum_{i=0}^\infty \frac{W^it^i}{i!}
\end{pmatrix}\\
&= \begin{pmatrix}
I & 0\\
W^{-1}(e^{Wt}-I)B & e^{Wt}
\end{pmatrix}.
\end{align*}

We now show that $\lim_{t \rightarrow \infty} e^{Wt} = 0$ which implies the desired result. Let $c_j$ for $j=1,\ldots,p$ be an eigenbasis for $W$ such that $W c_j = \lambda_j c_j$. Note that since $W \prec 0$, $\lambda_j < 0$ for all $j$. Consider any $r \in \mathbb{R}^p$ with basis decomposition $r = \sum_j \gamma_j c_j$. We have
\begin{align*}
e^{Wt}r &= \sum_{j=1}^p \left(\sum_{i=0}^\infty \frac{W^it^i}{i!}\right) \gamma_j c_j\\
&= \sum_{j=1}^p \left(\sum_{i=0}^\infty \frac{\lambda_j^it^i}{i!}\right)\gamma_j c_j\\
&= \sum_{j=1}^p e^{\lambda_j t}\gamma_j c_j\\
&\rightarrow 0.
\end{align*}
Since this is true for any $r$, $\lim_{t \rightarrow \infty} e^{Wt} \rightarrow 0$.
\end{proof}

\bibliographystyle{abbrvnat}
\bibliography{refs}

\end{document}